\newtheorem{theorem}{Theorem} 
\newtheorem{lemma}{Lemma} 
\newtheorem{definition}{Definition}
\theoremstyle{definition}
\newtheorem{remark}{Remark}  
\newcommand{\E}{\mathbb{E}}
\newcommand{\Z}{\mathbb{Z}}
\newcommand{\R}{\mathbb{R}}
\renewcommand{\P}{\mathbb{P}}
\begin{document}

\title{Directed Networks with a Differentially Private Bi-degree Sequence\thanks{
We have changed the original title ``Directed networks with a noisy bi-degree sequence"
to ``Directed Networks with a Differentially Private Bi-degree Sequence" because contents are changed in the framework of differential privacy.}}

\author{Ting Yan\thanks{Department of Statistics, Central China Normal University, Wuhan, 430079, China.
\texttt{Email:} tingyanty@mail.ccnu.edu.cn.}
\\
Central China Normal University
}
\date{}

\maketitle

\begin{abstract}
Although a lot of approaches are developed to release
network data with a differentially privacy guarantee,
inference using noisy data in many network models is still
unknown or not properly explored.
In this paper, we release the bi-degree sequences of directed networks using
the Laplace mechanism and use the $p_0$ model for inferring the degree parameters.
We show that the estimator of the parameter without the denoised process is asymptotically consistent and normally distributed.
This is contrast sharply with some known results that valid inference such as the existence and consistency of the estimator needs the denoised process.
Along the way, a new phenomenon is revealed in which an additional variance factor
appears in the asymptotic variance of the estimator when the noise becomes large.
Further, we propose an efficient algorithm for finding the closet point lying in the set of all graphical bi-degree sequences under the global $L_1$ optimization
problem.
Numerical studies demonstrate our theoretical findings.

\vskip 5 pt \noindent
\textbf{Key words}: Asymptotic normality, Consistency, Differentially private, $p_0$ model, Synthetic graph. \\

{\noindent \bf Mathematics Subject Classification:} 	62F12, 91D30.
\end{abstract}

\vskip 5 pt


\section{Introduction}

As more and more network data (of all kinds, but especially social ones) are collected and made publicly available,
the data privacy has become an important issue in network data analysis since they may contain
sensitive information about individuals and their relationships (e.g., sexual relationships, email exchanges).
Directly publishing these sensitive data with anonymized or unanonymized nodes could cause severe privacy problems or even lead to legal actions.
For example, Netflix released the Netflix Prize data-set for public analysis in 2007, which contains anonymized network data
about the viewing habits of its members. Two years later, Netflix was
involved in a lawsuit with one of its members, who had been
victimized by the resulting privacy invasion, because the de-anonymization
technique make the re-identification of individuals possible using public information [\cite{Task:Clifton:2012}].
Nevertheless, the benefit to analyze them are obvious to addressing a
variety of important issues including disease
transmission, fraud detection, precision marketing, among many others.

To guarantee the confidence information not be disclosed,
sensitive network data must be carefully treated before being made public.
Although it is easy to attack under the anonymization technique by 
releasing an anonymized isomorphic network [e.g., \cite{Backstrom:Dwork:Kleinberg:2011}], 
some refined
anonymization techniques have been proposed [e.g., \cite{Campan:Truta:2009, Narayanan:Shmatikov:2009, Zhou:Pei:Luk:2008}].
These methods transform the original graph into a new graph by adding/removing edges or clustering of nodes into
groups. However, they depends on an
attacker's background knowledge and may fail to protect the private information.
\cite{Dwork:Mcsherry:Nissim:Smith:2006} develop a rigorous privacy standard for randomized data releasing mechanisms, differential privacy, to achieve privacy protection.
An algorithm that satisfies differential privacy, requires that the outputs should not be significantly different if the inputs are similar.
Differential privacy provides strong guarantees of privacy
without making any assumptions about the background knowledge of attackers.
Since then, it has been widely used as a privacy standard to release network data  [e.g., \cite{Hay:2009,Lu:Miklau:2014, Task:Clifton:2012, Jorgensen:Yu:Graham:2016}].

Although many differentially private algorithms have been developed to release network data or their
aggregate network statistics safely [e.g., \cite{Jorgensen:Yu:Graham:2016, Lu:Miklau:2014, Nguyen:Imine:Rusinowitch:2016, Task:Clifton:2012}],
statistical inference with noisy network data is still in its infancy.
How to accurately estimate model parameters and analyze asymptotic properties of their estimators  using noisy data
in many network models are still unknown or have not been properly explored.
There have been some recent developments in inference by using a differentially private degree sequence of undirected graphs.
\cite{Hay:2009} use the Laplace mechanism to release the degree partition and propose an efficient algorithm
to find the solution that minimizes the $L_2$ distance between all possible graphical degree partitions and the noisy degree partition.
With this post-processing step, they obtain an accurate estimate of the degree distribution of a graph.
\cite{Karwa:Slakovic:2016} use a discrete Laplace mechanism to release the degree sequence. By using the techniques for proving the consistency of the maximum likelihood estimator
 in the $\beta$-model in \cite{Chatterjee:Diaconis:Sly:2011} and those for obtaining its asymptotic normality in
\cite{Yan:Xu:2013}, \cite{Karwa:Slakovic:2016} proved that a differentially private estimator of the parameter in the $\beta$-model
is consistent and asymptotically normally distributed.
Moreover, they construct an efficient algorithm to denoise the differentially private degree sequence by solving an $L_1$ optimization problem.
\cite{Day:Li:Lyu:2016} proposed approaches based on aggregation and cumulative histogram to publish the degree distribution under node differential privacy.
\cite{Sealfon:Ullman:2019} proposed  an efficient algorithm for estimating the parameter of an Erd\"{o}s-R\'{e}nyi graph under node differential privacy.


In this paper, we focus on inference by using the differentially private bi-sequences of directed networks.
As pointed by \cite{Hay:2009},  it may fail to protect privacy if we directly release the degree sequence since
some graphs have unique degree sequences.
In some other scenarios, the bi-degrees of nodes are themselves sensitive information.
For instances, the out-degree of an individual reveals how many people are infected by him/her in
sexually transmitted disease networks and such information is sensitive.
In this case, it is required to limit disclosure of the bi-degrees.
We use the Laplace mechanism to release the bi-degree sequence and perform inference by using the noisy bi-sequence.
The main contributions are as follows. First, we show that the estimator of the parameter
in the $p_0$ model based on the moment equation in which the unobserved original bi-degree sequence is directly replaced by the noisy bi-sequence,
is consistent and asymptotically normal without
the denoised process. This is contrast sharply with some existing results [e.g., \cite{Fienberg:Rinaldo:Yang:2010, Karwa:Slakovic:2016}],
in which ignoring the noisy process can lead to non-consistent and even nonexistent parameter estimates.
The $p_0$ model is an exponential random graph model with the bi-degree sequence as its exclusively sufficient statistic.
Along the way, a new phenomenon is revealed in which an additional variance factor
appears in the asymptotic variance of the estimator when the noise becomes large. To the best of our knowledge, this is the first time
to discover this phenomenon in the noisy network data analysis.
We show that the differentially private estimator corresponding to the denoised bi-sequence
is also consistent and asymptotically normal.
Second, we propose an efficient algorithm to denoise the noisy bi-sequence,
which finds the closest point lying in the set of all possible graphical bi-degree sequences under the global $L_1$ optimization problem.
The denoised bi-sequence can be used to obtain an accurate estimate of the degree distribution of a directed graph.
Along the way, it also output a synthetic directed graph that can be used to infer the graph structure.
It is remarkable that the denoised step is needed in order to obtain valid estimates of graph structures since the noisy bi-sequence may not graphical.
Finally, we provide simulation studies as well as three real data analyses to illustrate theoretical results.

For the rest of the paper, we proceed as follows. In Section \ref{section:estimation}, we first
introduce a necessary  background on differential privacy and then present the estimation the degree parameter in the $p_0$ model using
the differentially private bi-sequence.
In Section \ref{section:asymptotic}, we present the consistency and asymptotic normality of the differentially private estimator.
In Section \ref{section:release}, we denoise the noisy bi-sequence, establish the upper bound of the error between the denoised bi-sequence and
the noisy bi-sequence and present the asymptotic properties of the estimator corresponding to the denoised bi-sequence.
In Section \ref{section:simulation}, we carry out the simulation studies to evaluate
the theoretical results and analyze three real network datasets.
We make the summary and further discussion in Section \ref{section:discussion}.
All proofs are regelated into online supplementary material.  

\section{Estimation from a differentially private bi-degree sequence}
\label{section:estimation}

Let $G_n$ be a simple directed graph on $n\geq 2$ nodes labeled by ``1, \ldots, n".
Here, ``simple" means that there are no multiple edges and no self-loops in $G_n$.
Let $A=(a_{i,j})$ be the adjacency matrix of $G_n$, where
$a_{i,j}$ is an indictor variable of the directed edge from head node $i$ to tail node $j$.
If there exists a directed edge from $i$ to $j$, then $a_{i,j}=1$; otherwise $a_{i,j}=0$.
Since $G_n$ is loopless, let $a_{i,i}=0$ for convenience.
Let $d_i^+= \sum_{j \neq i} a_{i,j}$ be the out-degree of node $i$
and $d^+=(d_1^+, \ldots, d_n^+)^\top$ be the out-degree sequence of the graph $G_n$.
Similarly, define $d_i^- = \sum_{j \neq i} a_{j,i}$ as the in-degree of node $i$
and $d^-=(d_1^-, \ldots, d_n^-)^\top$ as the in-degree sequence.
The pair $d=( (d^+)^\top, (d^-)^\top)^\top$ or $\{(d_1^+, d_1^-), \ldots, (d_n^+, d_n^-)\}$ are called the bi-degree sequence.

In this section,
we first give a brief introduction to differential privacy.
Then we release the bi-degree sequence under edge differential privacy and estimate the degree parameter in the $p_0$ model.

\subsection{Differential privacy}

Consider an original database $D$ containing a set of records of $n$ individuals.
We focus on mechanisms that take $D$ as input and output a sanitized database $S=(S_1, \ldots, S_k)$
for public use. The size of $S$ may not be the same as $D$.
A randomized data releasing mechanism $Q(\cdot|D)$ defines a conditional probability distribution on outputs $S$ given $D$.
Let $\epsilon$  be a positive real number and $\mathcal{S}$ denote the sample space of $Q$. The data releasing mechanism $Q$ is \emph{$\epsilon$-differentially private}
if for any two neighboring databases $D_1$ and $D_2$ that differ on a single element (i.e., the data of one person),
and all measurable subsets $B$ of $\mathcal{S}$ [\cite{Dwork:Mcsherry:Nissim:Smith:2006}],
\[
Q(S\in B |D_1) \leq e^{\epsilon }\times Q(S\in B |D_2).
\]

The privacy parameter $\epsilon$ is chosen by the data curator
administering the privacy policy and is public, which controls the trade-off
between privacy and utility.
Smaller value of $\epsilon$ means more privacy protection.

Differential privacy requires that the distribution of the output is almost the same whether or not an individual's record appears in the
database.  We illustrate why it protects privacy with an example. Suppose
a hospital wants to release some statistics on the medical records of their
patients to the public. In response,
a patient may wish to make his record omitted from
the study due to a privacy concern that the published results will
reveal something about him personally.
Differential privacy alleviates this concern because
whether or not the patient participates in the study, the
probability of a possible output is almost the same.
From a theoretical point, any test statistic has nearly no power for testing
whether an individual's data is in the original database or not;
see \cite{Wasserman:Zhou:2010} for a rigourous proof.

What is being protected in the differential privacy is precisely the difference between two
neighboring databases.
Within network data,
depending on the definition of
the graph neighbor, \emph{differential privacy} is divided into \emph{node differential privacy} [\cite{Kasiviswanathan:Nissim:Raskhodnikova:Smith;2013}] and
\emph{edge differential privacy} [\cite{Nissim:Raskhodnikova:Smith:2007}].
Two graphs are called neighbors if they differ in exactly one
edge, then \emph{differential privacy} is \emph{edge differential privacy}.
Analogously, we can define
\emph{node differential privacy} by letting graphs be neighbors if one can be obtained from the other by
removing a node and its adjacent edges.
Edge differential privacy protects edges not to be detected, whereas node differential privacy protects nodes together with their
adjacent edges, which is a stronger privacy policy.
However, it may be infeasible to design algorithms that are
both node differential privacy and have good utility. As an example, \cite{Hay:2009} show that estimating node degrees are
highly inaccurate under node differential privacy due to that the global sensitive in Definition \ref{definiton-2} is too large (in the worst case having an order $n$) such that
the output is useless.
Following \cite{Hay:2009}, we use edge differential privacy here.

Let $\delta(G, G^\prime)$ be the number of edges
on which $G$ and $G^\prime$ differ.
The formal definition of edge differential privacy is as follows.

\begin{definition}[Edge differential privacy]
Let $\epsilon>0$ be a privacy parameter. A randomized mechanism
$Q(\cdot |G)$ is $\epsilon$-edge differentially private if
\[
\sup_{ G, G^\prime \in \mathcal{G}, \delta(G, G^\prime)=1 } \sup_{ S\in \mathcal{S}}  \frac{ Q(S|G) }{ Q(S|G^\prime ) } \le e^\epsilon,
\]
where $\mathcal{G}$ is the set of all directed graphs of interest on $n$ nodes and
$\mathcal{S}$ is the set of all possible outputs.
\end{definition}

Let $f: \mathcal{G} \rightarrow \mathbb{R}^{k}$ be a function. The global sensitivity [\cite{Dwork:Mcsherry:Nissim:Smith:2006}] of the function $f$, denoted $\Delta f$, is defined below.

\begin{definition}
\label{definiton-2}
(Global Sensitivity).
Let $f:\mathcal{G} \to \R^k$. The global sensitivity of $f$ is defined as
\[
\Delta(f) = \max_{ \delta( G, G^\prime) =1 } \| f(G)- f(G^\prime) \|_1
\]
where $\| \cdot \|_1$ is the $L_1$ norm.
\end{definition}

The global sensitivity measures the worst case difference between any two neighboring graphs.
The magnitude of noises added in the differentially private algorithm $Q$ crucially depends on the global sensitivity.
If the outputs are the network statistics, then a simple algorithm to guarantee EDP is the Laplace Mechanism [e.g., \cite{Dwork:Mcsherry:Nissim:Smith:2006}]
that adds the Laplace noise proportional to the global sensitivity of $f$.

\begin{lemma}\label{lemma:DLM}(Laplace Mechanism).
Let $f:\mathcal{G} \to \R^k$. Let $e_1, \ldots, e_k$ be independent and identically distributed Laplace random variables with
density function $e^{-|x|/\lambda}/\lambda$.
Then the Laplace Mechanism outputs $f(G)+(e_1, \ldots, e_k)$ is $\epsilon$-edge differentially private, where $\epsilon= -\Delta(f)\log \lambda$.
\end{lemma}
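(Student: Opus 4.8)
The plan is to verify the definition of $\epsilon$-edge differential privacy directly, by bounding the likelihood ratio $Q(S\mid G)/Q(S\mid G')$ uniformly over all output sets $S$ and all edge-neighboring pairs $G,G'$. The main reduction is from an arbitrary set $S$ to a single output value. Because the released vector is $f(G)+(Z_1,\ldots,Z_k)$ with the $Z_i$ mutually independent, the law $Q(\cdot\mid G)$ assigns to an output value $s=(s_1,\ldots,s_k)$ the mass $\prod_{i=1}^k p\big(s_i-f_i(G)\big)$, where $p$ is the common noise kernel of the discrete Laplace mechanism, proportional to $\lambda^{|z|}$ with $0<\lambda<1$ so that $\epsilon=-\Delta(f)\log\lambda>0$. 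If I can establish the pointwise bound $\prod_i p(s_i-f_i(G))\le e^{\epsilon}\prod_i p(s_i-f_i(G'))$ for every $s$, then summing over $s\in S$ immediately yields $Q(S\mid G)\le e^{\epsilon}Q(S\mid G')$, since a uniform pointwise domination of the summands transfers to their sums.

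First I would fix $s$ and write out the ratio coordinate by coordinate. The normalizing constant of $p$ cancels in each factor, leaving
\[
\frac{\prod_{i=1}^k p(s_i-f_i(G))}{\prod_{i=1}^k p(s_i-f_i(G'))}
=\lambda^{\sum_{i=1}^k\big(|s_i-f_i(G)|-|s_i-f_i(G')|\big)}.
\]
Next I would control the exponent with the reverse triangle inequality, $|s_i-f_i(G)|-|s_i-f_i(G')|\ge -|f_i(G)-f_i(G')|$, and use that $z\mapsto\lambda^{z}$ is decreasing because $\lambda\in(0,1)$. This converts the lower bound on the exponent into an upper bound on the ratio,
\[
\lambda^{\sum_i(|s_i-f_i(G)|-|s_i-f_i(G')|)}
\le \lambda^{-\sum_i|f_i(G)-f_i(G')|}
=\lambda^{-\|f(G)-f(G')\|_1}.
\]

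Finally, since $G$ and $G'$ are edge-neighbors, $\delta(G,G')=1$, the definition of the global sensitivity gives $\|f(G)-f(G')\|_1\le\Delta(f)$, so the ratio is at most $\lambda^{-\Delta(f)}=e^{-\Delta(f)\log\lambda}=e^{\epsilon}$. Taking the supremum over $s$ (hence over $S$) and over neighboring $G,G'$ completes the verification of the definition of edge differential privacy. I expect the only delicate point to be the set-to-point reduction in the first paragraph: one must use that the output space is discrete (the degree statistics are integer-valued and the discrete Laplace noise is integer-valued), so that $Q(S\mid\cdot)$ is a genuine sum of the pointwise masses and the pointwise bound transfers without measure-theoretic subtleties. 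The remaining steps are the standard triangle-inequality estimate, where the one thing to get right is the direction of the inequality once $\lambda<1$ makes $\lambda^{z}$ decreasing.
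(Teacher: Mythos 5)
Your proposal is correct and follows essentially the same route as the paper's own (commented-out) proof: bound the pointwise likelihood ratio, cancel the normalizing constants, apply the reverse triangle inequality coordinatewise, use that $\lambda^{x}$ is decreasing for $\lambda\in(0,1)$, and invoke the global sensitivity bound $\|f(G)-f(G')\|_1\le\Delta(f)$ to get the ratio at most $\lambda^{-\Delta(f)}=e^{\epsilon}$. Like the paper, you work with the kernel proportional to $\lambda^{|z|}$ (the parametrization under which the stated formula $\epsilon=-\Delta(f)\log\lambda$ is the right one), and your explicit set-to-point reduction is a minor refinement the paper simply asserts.
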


When $f(G)$ is integer, one can use a discrete Laplace random variable as the noise as in \cite{Karwa:Slakovic:2016}, where it has the probability mass function:
\[
\P(X=x)= \frac{1-\lambda}{1+\lambda} \lambda^{|x|},~~x \in \{0, \pm 1, \ldots\}, \lambda\in(0,1).
\]
Lemma \ref{lemma:DLM} still holds if the continuous Laplace distribution is replaced by the discrete version.

One nice property of differential privacy is that any function of a differentially
private mechanism is also differentially private.

\begin{lemma}[\cite{Dwork:Mcsherry:Nissim:Smith:2006, Wasserman:Zhou:2010}]
\label{lemma:fg}
Let $f$ be an
output of an $\epsilon$-differentially private mechanism and $g$ be any function. Then
$g(f(G))$ is also $\epsilon$-differentially private.
\end{lemma}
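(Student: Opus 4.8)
The plan is to exploit that post-composing with $g$ only coarsens the output space, so the privacy inequality transfers intact through preimages. Write $Q(\cdot\,|\,G)$ for the $\epsilon$-differentially private mechanism that produces $f$, so that for every pair of neighboring graphs $G,G'$ with $\delta(G,G')=1$ and every measurable set $S$ in the output space $\mathcal{S}$ of $f$ one has $Q(S\,|\,G)\le e^{\epsilon}\,Q(S\,|\,G')$. Let $g$ be a measurable map from $\mathcal{S}$ into a range $\mathcal{T}$, and denote by $Q_g(\cdot\,|\,G)$ the law of $g(f(G))$.

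First I would fix an arbitrary measurable set $T\subseteq\mathcal{T}$ and use the elementary set identity $\{g(f)\in T\}=\{f\in g^{-1}(T)\}$. This gives $Q_g(T\,|\,G)=Q\bigl(g^{-1}(T)\,|\,G\bigr)$; that is, the probability that $g\circ f$ lands in $T$ is exactly the probability that $f$ lands in the preimage $S:=g^{-1}(T)$. Next I would apply the differential-privacy guarantee of $Q$ to this particular set $S$: since $g$ is measurable, $g^{-1}(T)$ is a measurable subset of $\mathcal{S}$, so $Q\bigl(g^{-1}(T)\,|\,G\bigr)\le e^{\epsilon}\,Q\bigl(g^{-1}(T)\,|\,G'\bigr)=e^{\epsilon}\,Q_g(T\,|\,G')$. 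Chaining the two displays yields $Q_g(T\,|\,G)\le e^{\epsilon}\,Q_g(T\,|\,G')$ for every neighboring pair and every $T$, which is precisely $\epsilon$-edge differential privacy for $g\circ f$.

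The computation is a one-line preimage manipulation, so the only delicate points are technical rather than substantive. The essential thing to check is measurability: the privacy bound for $Q_g$ quantifies over measurable $T$, and one must ensure that each $g^{-1}(T)$ lies in the $\sigma$-algebra on which the guarantee for $Q$ was asserted, which is automatic once $g$ is taken to be measurable. I would also note the extension to a randomized $g$ whose internal randomness is independent of the data: conditioning on those coins reduces matters to the deterministic case, and a mixture of maps each obeying the multiplicative bound still obeys it, so the conclusion is unchanged.
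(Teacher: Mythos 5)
Your proof is correct. Note that the paper does not actually prove this lemma---it is stated as a known result, with citations to Dwork et al.\ (2006) and Wasserman and Zhou (2010)---so there is no internal proof to compare against; your preimage argument, $Q_g(T\,|\,G)=Q\bigl(g^{-1}(T)\,|\,G\bigr)\le e^{\epsilon}\,Q\bigl(g^{-1}(T)\,|\,G'\bigr)=e^{\epsilon}\,Q_g(T\,|\,G')$, together with the reduction of a randomized $g$ to the deterministic case by conditioning on its data-independent coins, is exactly the standard post-processing argument used in those references. The only remark worth adding is that in the setting of this paper the outputs (bi-degree sequences plus discrete Laplace noise, and estimators derived from them) live in discrete spaces, so the measurability caveat you raise is automatic and the phrase ``any function'' in the statement is harmless.
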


By Lemma \ref{lemma:fg}, any post-processing done on the noisy
bi-degree sequences obtained as an output of a differentially private mechanism is also
differentially private.

\subsection{The differentially private bi-degree sequence}

We use the discrete Laplace mechanism in Lemma \ref{lemma:DLM} to
release the bi-degree sequence $d=(d^+, d^-)$ under edge differential privacy.
Note that $f(G_n)=(d^+, d^-)$.
If we add or remove a directed edge $i\to j$ in $G_n$, then the out-degree of the head node and
the in-degree of the tail node associated with the changed edge increase or decrease $1$ each.
Therefore, the global sensitivity for the bi-degree sequence is $2$.
The released steps are in Algorithm \ref{algorithm:a}, where a differentially private bi-sequence is returned.

\begin{algorithm}
\caption{Releasing $d$}
\label{algorithm:a}
\KwData{The bi-degree sequence $d$ and privacy parameter $\epsilon_n$}
\KwResult{The differentially private bi-sequence $z$}
Let $d=(d^+, d^-)$ be the bi-degree sequence of $G_n$\;
\For{ $i= 1 \to n$}{
 Generate two independent $e_i^+$ and $e_i^-$ from discrete Laplace with $\lambda_n= \exp(-\epsilon_n/2)$\;
 Let $z_i^+=d_i^+ + e_i^+$ and $z_i^- = d_i^- + e_i^-$
}
\end{algorithm}

\subsection{Estimation based on the $p_0$ model}
To make statistical inference from a noisy bi-sequence, we need to specify a model on the original bi-degree sequence.
If no prior information is given, we can model $d$ according to the maximum entropy principle [\cite{Wu:1997}].
It forces the probability distribution on graph $G_n$ into the exponential family with the bi-degree sequence as
the sufficient statistic, which admits the maximum entropy when the expectation of a bi-degree sequence is given.
Hereafter, we refer to this model as the $p_0$ model. The subscript ``0" means a simpler model than the $p_1$ model
that contains an additional reciprocity parameter [\cite{Holland:Leinhardt:1981}].
The $p_0$ model can be represented as:
\begin{equation}
\label{p0model}
\P(G_n)= \frac{1}{c(\alpha, \beta)} \exp( \sum_i \alpha_i d_i^+ + \sum_j \beta_j d_j^- ),
\end{equation}
where $c(\alpha, \beta)$ is a normalizing constant, $\alpha=(\alpha_1, \ldots, \alpha_n)^\top$
and $\beta=(\beta_1, \ldots, \beta_n)^\top$.
The outgoingness parameter $\alpha_i$ characterizes how attractive the node is and the incomingness parameter $\beta_{i}$ illustrate the extent to which the node is attracted to others as in \cite{Holland:Leinhardt:1981}.
Although the $p_0$ model looks simple, it is still useful to applications
where only the bi-degree sequence is used. First, it can be served as null
models for hypothesis testing [e.g., \cite{Holland:Leinhardt:1981,Fienberg:Wasserman:1981,Zhang:Chen:2013}]. Second, it can be
used to re-construct networks and make statistical inferences in a situation
in which only the bi-degree sequence is available due to privacy consideration
[e.g., \cite{Helleringer:Kohler:2007}]. Third, it can be used as a preliminary
analysis for choosing suitable statistics for network configurations
[e.g., \cite{Robins.et.al.2009}].

Since an out-edge from node $i$ pointing to $j$ is the in-edge of $j$ coming from $i$, it leads to that
the sum of out-degrees is equal to the sum of in-degrees.
If one transforms $(\alpha, \beta)$ to $(\alpha-c, \beta+c)$, the probability distribution in \eqref{p0model} does not change.
For the sake of the identification of model parameters, we set $\beta_n=0$ as in \cite{Yan:Leng:Zhu:2016}.
The $p_0$ model can be formulated by an array of mutually independent Bernoulli random variables $a_{i,j}$, $1\le i\neq j\le n$ with probabilities [\cite{Yan:Leng:Zhu:2016}]:
\[
\P(a_{i,j}=1) = \frac{ e^{\alpha_i + \beta_j} }{ 1 + e^{\alpha_i + \beta_j} }.
\]
The normalizing constant $c(\alpha, \beta)$ is $\sum_{i\neq j}\log( 1 + e^{\alpha_i+\beta_j} )$.
We use the following equations to estimate the degree parameter:
\renewcommand{\arraystretch}{1.2}
\begin{equation}\label{eq:likelihood-DP}
\large
\begin{array}{lcl}
z_i^+  & = & \sum_{j\neq i} \frac{e^{\alpha_i+\beta_j}}{1+e^{\alpha_i+\beta_j}}, ~~i=1, \ldots, n, \\
z_j^-  & = & \sum_{i\neq j} \frac{e^{\alpha_i+\beta_j}}{1+e^{\alpha_i+\beta_j}}, ~~j=1, \ldots, n-1,
\end{array}
\end{equation}
where $z$ is the differentially private bi-sequence of Algorithm \ref{algorithm:a}.
The fixed point iteration algorithm can be used to solve the above system of equations.
Since $E(e_i)=0$, the above equations are also the moment equations.
Let $\theta=(\alpha_1, \ldots, \alpha_n, \beta_1, \ldots, \beta_{n-1})^\top$.
The solution $\widehat{\theta}$ to the equations \eqref{eq:likelihood-DP} is the differentially private estimator of $\theta$ according to Lemma \ref{lemma:fg},
where $\widehat{\theta}=(\hat{\alpha}_1, \ldots, \hat{\alpha}_n, \hat{\beta}_1, \ldots, \hat{\beta}_{n-1} )^\top$
and $\hat{\beta}_n=0$.

\section{Asymptotic properties of the estimator}
\label{section:asymptotic}
In this section, we present the consistency and asymptotical normality of the differentially private estimator.
For a subset $C\subset \R^n$, let $C^0$ and $\overline{C}$ denote the interior and closure of $C$, respectively.  For a vector $x=(x_1, \ldots, x_n)^\top\in R^n$, denote by
$\|x\|_\infty = \max_{1\le i\le n} |x_i|$, the $\ell_\infty$-norm of $x$.  For an $n\times n$ matrix $J=(J_{i,j})$, let $\|J\|_\infty$ denote the matrix norm induced by the $\ell_\infty$-norm on vectors in $\R^n$, i.e.
\[
\|J\|_\infty = \max_{x\neq 0} \frac{ \|Jx\|_\infty }{\|x\|_\infty}
=\max_{1\le i\le n}\sum_{j=1}^n |J_{i,j}|.
\]

Since the number of parameters increase with the number of nodes, classical statistical theories can not be directly applied to obtain the asymptotic results of
estimator. We use the Newton method developed in \cite{Yan:Leng:Zhu:2016} to show the consistency. Here we deal with  not only the high dimensional problem but also the
errors carried by noises while \cite{Yan:Leng:Zhu:2016} only considered the high dimensional problem.
The idea of the proof for the existence and consistency of $\widehat{\theta}$ can be briefly described as follows.
Define a system of functions:
\renewcommand{\arraystretch}{1.2}
\begin{equation}\label{eq:F-DP}
\large
\begin{array}{lll}
F_i( \theta ) &  =  & z_i^+ -  \sum_{k=1; k \neq i}^n \frac{e^{\alpha_i+\beta_k}}{1+e^{\alpha_i+\beta_k} }, ~~~  i=1, \ldots, n, \\
F_{n+j}( \theta ) & = & z_j^- - \sum_{k=1; k\neq j}^n \frac{e^{\alpha_k+\beta_j}}{1+e^{\alpha_k+\beta_j}},  ~~~  j=1, \ldots, n, \\
F( \theta ) & = & (F_1( \theta ), \ldots, F_{2n-1}( \theta ))^\top.
\end{array}
\end{equation}
Note the solution to the equation $F( \theta )=0$ is precisely the estimator.
We construct the Newton iterative sequence: $\theta^{(k+1)} = \theta^{(k)} - [ F'(\theta^{(k)})]^{-1} F(\theta^{(k)})$. If the initial value is chosen as
the true value $\theta^*$, then it is left to bound the error between the initial point
and the limiting point to show the consistency. This is done by establishing a geometric convergence of rate for the iterative sequence.
The details are in online supplementary material.
The existence and consistency of $\widehat{\theta}$ is stated blow.

\begin{theorem}\label{Theorem:con}
Assume that $A \sim \P_{ \theta^*}$, where $\P_{ \theta^*}$ denotes
the probability distribution \eqref{p0model} on $A$ under the parameter $\theta^*$.
If $(1+ 4\epsilon_n^{-1}) e^{12\|\theta^*\|_\infty } = o( (n/\log n)^{1/2} )$,
 then with probability approaching one as $n$ goes to infinity, the estimator $\widehat{\theta}$ exists
and satisfies
\[
\|\widehat{\theta} - \theta^* \|_\infty = O_p\left(  (1+\frac{4}{\epsilon_n}) \frac{ (\log n)^{1/2}e^{6\|\theta^*\|_\infty} }{ n^{1/2} } \right)=o_p(1).
\]
Further, if $\widehat{\theta}$ exists, it is unique.
\end{theorem}

\begin{remark}
The condition $(1+\frac{4}{\epsilon_n}) e^{12\|\theta^*\|_\infty } = o( (n/\log n)^{1/2} )$ in Theorem \ref{Theorem:con} to guarantee the consistency of the estimator,
exhibits an interesting trade-off between the privacy parameter $\epsilon_n$ and $\|\theta^*\|_\infty$. If $\|\theta^*\|_\infty$ is bounded by a constant, $\epsilon_n$ can be as small as
$n^{1/2}/(\log n)^{-1/2}$. Conversely, if $e^{\|\theta^*\|_\infty}$ is growing at a rate of $n^{1/12}/(\log n)^{1/12}$,
then $\epsilon_n$ can only be at a constant magnitude.
\end{remark}

In order to present asymptotic normality of $\widehat{\theta}$, we introduce a class of matrices.
Given two positive numbers $m$ and $M$ with $M \ge m >0$, we say the $(2n-1)\times (2n-1)$ matrix $V=(v_{i,j})$ belongs to the class $\mathcal{L}_{n}(m, M)$ if the following holds:
\begin{equation}\label{eq:LmM}
\begin{array}{l}
m\le v_{i,i}-\sum_{j=n+1}^{2n-1} v_{i,j} \le M, ~~ i=1,\ldots, n-1; ~~~ v_{n,n}=\sum_{j=n+1}^{2n-1} v_{n,j}, \\
v_{i,j}=0, ~~ i,j=1,\ldots,n,~ i\neq j, \\
v_{i,j}=0, ~~ i,j=n+1, \ldots, 2n-1,~ i\neq j,\\
m\le v_{i,j}=v_{j,i} \le M, ~~ i=1,\ldots, n,~ j=n+1,\ldots, 2n-1,~ j\neq n+i, \\
v_{i,n+i}=v_{n+i,i}=0,~~ i=1,\ldots,n-1,\\
v_{i,i}= \sum_{k=1}^n v_{k,i}=\sum_{k=1}^n v_{i,k}, ~~ i=n+1, \ldots, 2n-1.
\end{array}
\end{equation}
Clearly, if $V\in \mathcal{L}_{n}(m, M)$, then $V$ is a $(2n-1)\times (2n-1)$ diagonally dominant, symmetric nonnegative
matrix.
Define $v_{2n,i}=v_{i,2n}:= v_{i,i}-\sum_{j=1;j\neq i}^{2n-1} v_{i,j}$ for $i=1,\ldots, 2n-1$ and $v_{2n,2n}=\sum_{i=1}^{2n-1} v_{2n,i}$.
\cite{Yan:Leng:Zhu:2016} propose to approximate the inverse of $V$, $V^{-1}$, by the matrix $S=(s_{i,j})$, which is defined as
\begin{equation}
\label{definition:S}
s_{i,j}=\left\{\begin{array}{ll}\frac{\delta_{i,j}}{v_{i,i}} + \frac{1}{v_{2n,2n}}, & i,j=1,\ldots,n, \\
-\frac{1}{v_{2n,2n}}, & i=1,\ldots, n,~~ j=n+1,\ldots,2n-1, \\
-\frac{1}{v_{2n,2n}}, & i=n+1,\ldots,2n-1,~~ j=1,\ldots,n, \\
\frac{\delta_{i,j}}{v_{i,i}}+\frac{1}{v_{2n,2n}}, & i,j=n+1,\ldots, 2n-1,
\end{array}
\right.
\end{equation}
where $\delta_{i,j}=1$ when $i=j$ and $\delta_{i,j}=0$ when $i\neq j$.

We use $V$ to denote the Fisher information matrix of $\theta$ in the $p_0$ model.
It can be shown that
\[
v_{ij}=\frac{ e^{\alpha_i + \beta_j} }{ (1 + e^{\alpha_i + \beta_j})^2 },~~ 1\le i\neq j \le n.
\]
Since $e^x/(1+e^x)^2$ is an increasing function on $x$ when $x\ge 0$ and
a decreasing function when $x\le 0$, we have
\[
\frac{(n-1)e^{2\| \theta\|_\infty}}{(1+e^{2\| \theta\|_\infty})^2}\le v_{ii} \le \frac{n-1}{4}, ~~ i=1, \ldots, 2n.
\]
Therefore $V\in \mathcal{L}_n(m,M)$, where $m$ is the left expression and $M$ is the right expression in the above inequality.
The asymptotic distribution of $\widehat{\theta}$ depends on $V$.
Let $g=(d_1^+, \ldots, d_n^+, d_1^-, \ldots, d_{n-1}^-)^\top$ and
$\tilde{g}=(z_1^+, \ldots, z_n^+, z_1^-, \ldots, z_{n-1}^-)^\top$.
If we apply Taylor's expansion to each component of $\tilde{g} -\E g$, then the second order term in the expansion is $V(\widehat{\theta} - \theta)$.
Since $V^{-1}$ does not have a closed form, we work with $S$ defined at \eqref{definition:S} to approximate it. Then we represent $\widehat{\theta}-\theta$ as the sum of
$S(\tilde{g} - \E g)$ and a remainder. The central limit theorem is proved by establishing the asymptotic normality of $S( \tilde{g} - \E g)$ and
showing the remainder is negligible.
We formally state the central limit theorem as follows.

\begin{theorem}\label{Theorem:binary:central}
Assume that $A\sim \P_{\theta^*}$ and $(1+\frac{4}{\epsilon_n})^2 e^{ 18\|\theta^*\|_\infty } = o( (n/\log n)^{1/2} )$.\\
(i) If $\frac{4}{\epsilon_n} (\log n)^{1/2} e^{2\|\theta^*\|_\infty}=o(1)$,
then for any fixed $k\ge 1$, as $n \to\infty$, the vector consisting of the first $k$ elements of $(\widehat{\theta}-\theta^*)$ is asymptotically multivariate normal with mean $\mathbf{0}$ and covariance matrix given by the upper left $k \times k$ block of $S$ defined at \eqref{definition:S}.\\
(ii) Let
\[
s_n^2=\mathrm{Var}( \sum_{i=1}^n e_i^+ - \sum_{i=1}^{n-1} e_i^-)=2(2n-1) \frac{ e^{-\epsilon_n/2 } }{ ( 1 - e^{ -\epsilon_n/2} )^2 }.
\]
If  $s_n/v_{2n,2n}^{1/2} \to c$ for some constant $c$,
then for any fixed $k \ge 1$, the vector consisting of the first $k$ elements of $(\widehat{\theta}-\theta^*)$ is asymptotically  $k$-dimensional multivariate normal distribution with mean $\mathbf{0}$
and covariance matrix
\[
\mathrm{diag}( \frac{1}{v_{1,1}}, \ldots, \frac{1}{v_{k,k}})+ (\frac{1}{v_{2n,2n}} + \frac{s_n^2}{v_{2n,2n}^2}) \mathbf{1}_k \mathbf{1}_k^\top,
\]
where $\mathbf{1}_k$ is a $k$-dimensional column vector with all entries $1$.
\end{theorem}

\begin{remark}
First, if we change the first $k$ elements of $(\widehat{\theta}-\theta^*)$ to an arbitrarily fixed $k$ elements with the subscript set $\{i_1, \ldots, i_k \}$,
Theorem \ref{Theorem:binary:central} still holds. This is because all steps in the proof are valid if we change the first $k$ subscript set $\{1, \ldots, k\}$ to $\{i_1, \ldots, i_k \}$.
Second, the asymptotic variance for the difference of the pairwise estimators $(\widehat{\theta}-\theta^*)_i - (\widehat{\theta}-\theta^*)_j$ is $1/v_{i,i} + 1/v_{j,j}$, regardless of
the additional variance factor $1/v_{2n,2n} + s_n^2/v_{2n,2n}^2$.
\end{remark}

\begin{remark}
In the second part of Theorem \ref{Theorem:binary:central}, the asymptotic variance of $\widehat{\theta}_i$ has an additional variance factor $s_n^2/v_{2n,2n}^2$.
This is different from Theorem 2 in \cite{Yan:Leng:Zhu:2016}, in which they consider the a non-differential private case.
The asymptotic expression of $\hat{\theta}_i$ contains a term $\sum_{i=1}^n e_i^+ - \sum_{i=1}^{n-1} e_i^-$. Its variance is in the magnitude of $n  e^{-\epsilon_n/2 }$.
When  $\epsilon_n$ becomes small, the variance increases quickly and its impact on the $\widehat{\theta}_i$ can not be ignored when it increases to a certain level.
This leads to the appearance of the additional variance factor.

\end{remark}

\section{The denoised bi-degrees and synthetic directed graphs}
\label{section:release}

The output $z$ of Algorithm \ref{algorithm:a} generally is not the graphical bi-degree sequence.
There have been several characterizations for the bi-degree sequence [e.g., \cite{Fulkerson:1960,Kleitman:Wang:1973,Majcher:1985}].
A necessary condition for graphical bi-degree sequences is that the sum of in-degrees is equal to
that of out-degrees and all in- and out- degrees are between $0$ and $n-1$.
To check what are the chances that this condition holds, we carry out some simulations.
We use the $p_0$ model to generate the random graphs and record their bi-degree sequences. Then use Algorithm \ref{algorithm:a}
to output the bi-sequence $z$. We set $\alpha_i, \beta_i \sim U(0,1)$ and $n=100$. We repeat $10,000$ simulations and record
the frequency that $\sum_i z_i^+=\sum_i z_i^-$ holds. The simulation results show that this condition holds with at most $1\%$.

To make $z$ be graphical, we need to denoise $z$.
The denoising process appears to be complex.
First, the number of parameters to be estimated $(d_i^+, d_i^-, i = 1, \ldots, n)$ is equal to the number of observations
$(z_i^+, z_i^-, i = 1, \ldots, n)$.
Second, the parameter space is discrete and very large, whose cardinality grows at least an exponential magnitude.
Let $B_n$ be the set of all possible bi-degree sequence of graph $G_n$.
It is natural to use the closest point $\hat{d}$ lying in $B_n$ as the denoised bi-sequence with some distance between $\hat{d}$ and $d$.
We use $L_1$ distance here and define the estimator as
\begin{equation}
\label{eq:optimize}
\hat{d}=\arg\min_{ d\in B_n}  ( \| z^+ - d^+ \|_1 + \| z^- - d^- \|_1).
\end{equation}
Notice that the maximum likelihood estimation leads to the same solution.
Specifically, since the parameter $\lambda_n$ in the noise addition process of Algorithm \ref{algorithm:a} is known,
the likelihood on observation $z$ with the parameter $d$ in $B_n$ is
\[
L( d | z )= c(\lambda_n) \exp \{ - (\sum_{i=1}^n |z_i^+ - d_i^+| + \sum_{i=1}^{n-1} |z_i^- - d_i^-|) \}.
\]
We can see that the MLE of $d$ is also $\hat{d}$.

We propose Algorithm \ref{algorithm:b} to produce the MLE $\hat{d}$.
Along the way, it also outputs a directed graph with $\hat{d}$ as its bi-degree sequence.
The correctness of Algorithm \ref{algorithm:b} is given in Theorem \ref{thm:MLE:degree}, whose proof is in online supplementary material.

\begin{algorithm}[!htp]
\caption{Denoising $z$ }
\label{algorithm:b}
\KwData{A bi-sequence of integers $z=(z^+, z^-)$}
\KwResult{A directed graph $G_n$ on $n$ vertices with bi-degree sequence $\hat{d}$}
Let $G_n$ be the empty graph on $n$ vertices\;
Let $S=\{1, \ldots, n\} \setminus \{i: z_i^+\le 0\}$\;
\While {$|S|>0$}
{
     $T = \{1, \ldots, n\} \setminus \{i: z_i^- \le 0\}$\;
     Let $z_{i^*}^+ = \max_{i \in S} z_i^+$ and $i^* = \min \{i\in S: z_i^+ = z_{i^*}^+ \}$\;
     Let $T = T \setminus \{i^*\}$ and $pos=|T|$\;
     Let $h_{i^*}= \min( z_{i^*}^+, pos)$\;
     Let $I=$indices of $h_{i^*}$ highest values in $z^-(T)$ where $z^-(T)$ is the sequence $z^-$\;
     restricted to the index set $T$\;
     Add a directed edge from $i^*$ to $k$ in $G_n$ for each $k\in I$\;
     Let $z_i^- = z_i^- - 1$ for all $i\in I$ and $S = S\setminus \{i^*\}$
}
\end{algorithm}

\begin{theorem}
\label{thm:MLE:degree}
Let $z=(z^+, z^-)$ be a bi-sequence of integers obtained from Algorithm \ref{algorithm:a}.
The bi-degree sequence of $G_n$ produced by Algorithm \ref{algorithm:b} is $\hat{d}$ defined at \eqref{eq:optimize}.
\end{theorem}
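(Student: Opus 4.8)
The plan is to prove two separate claims: first, that Algorithm \ref{algorithm:b} always terminates and returns a genuine simple directed graph, so that its bi-degree sequence is a feasible point of \eqref{eq:optimize}; and second, that this bi-degree sequence attains the minimum in \eqref{eq:optimize}. For the first part I would check the three ways the algorithm could fail to produce a simple digraph. No self-loop is ever created, because $i^*$ is deleted from $T$ before the $h_{i^*}$ targets are chosen; no parallel edge from a fixed source is created, because each processed source $i^*$ is removed from $S$ and is therefore never revisited; and the instruction to add $h_{i^*}=\min(z_{i^*}^+,|T|)$ edges is always executable, since $|T|=pos$ counts exactly the admissible targets with positive remaining in-demand. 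Hence the output is a simple digraph and its bi-degree sequence lies in $B_n$.

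For the substantive part I would first record a necessary and sufficient characterization of $B_n$. Viewing the adjacency matrix as a $0$--$1$ matrix with zero diagonal, a pair $(d^+,d^-)$ is graphical precisely when it satisfies the realizability inequalities recalled above (Fulkerson; Kleitman--Wang \cite{Kleitman:Wang:1973}) together with $\sum_i d_i^+=\sum_j d_j^-$ and $0\le d_i^\pm\le n-1$. I would then identify the greedy rule of Algorithm \ref{algorithm:b} with the directed Havel--Hakimi (Kleitman--Wang) construction for this ``bipartite-with-forbidden-diagonal'' realization problem: processing the largest outstanding out-demand first and assigning it to the nodes with largest outstanding in-demand produces the canonical realization. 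In particular the produced out-degrees satisfy $\hat d_{i^*}^+=\min(z_{i^*}^+,pos)$, so $\hat d_i^+$ agrees with $z_i^+$ whenever the cap is inactive, and the in-degrees are filled by always absorbing demand at the coordinates where $z_j^-$ is largest.

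The heart of the argument is an exchange lemma. Starting from an arbitrary minimizer $d^\ast$ of \eqref{eq:optimize}, I would transform it step by step into the greedy output $\hat d$ without ever increasing $\|z^+-d^+\|_1+\|z^--d^-\|_1$. The moves are of two kinds: shifting one unit of in-demand from a target with smaller $z^-$ to a target with larger $z^-$ (the direction the greedy rule prefers), and nudging an out-degree toward $z_i^+$; each move either strictly lowers the $L_1$ cost or leaves it unchanged while making $d^\ast$ agree with $\hat d$ on one more coordinate, and graphicality is preserved throughout by a standard alternating-edge swap within the realizing matrix. Since the process is finite and terminates at $\hat d$, the greedy output is itself a minimizer, which is the assertion of the theorem; when the minimizer of \eqref{eq:optimize} is unique this identifies it exactly, and uniqueness of the minimizing bi-degree sequence (though not of the realizing graph) follows from the strictness of the cost decrease whenever a disagreement can be removed in only one direction.

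The main obstacle I anticipate is the coupling between the two sequences together with the realizability constraint. Because $\sum_i d_i^+=\sum_j d_j^-$ links the out- and in-parts, one cannot minimize the two $L_1$ terms separately, and any exchange that improves one coordinate must be shown to keep the pair graphical. Controlling this simultaneously---that is, proving that the greedy's capping $h_{i^*}=\min(z_{i^*}^+,pos)$ and its ``largest-to-largest'' matching never force an avoidable mismatch in the complementary coordinate---is the delicate step, and is precisely where the realizability inequalities and the no-self-loop bookkeeping (the removal of $i^*$ from $T$) have to be invoked carefully.
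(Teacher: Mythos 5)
Your high-level plan---check that Algorithm \ref{algorithm:b} outputs a feasible point of \eqref{eq:optimize}, then turn an arbitrary minimizer into that output by cost-non-increasing exchanges---is the same strategy the paper uses, and your identification of the greedy rule with the directed Havel--Hakimi construction of \cite{Erdos:Peter:Miklos:Toroczkai:2010} is exactly the paper's starting point. The organization, however, differs at the level of the key lemma. The paper first reduces the global problem to a local one: Proposition \ref{pro:zero} zeroes out the coordinates where $z$ is non-positive, and Proposition \ref{pro:lessthan} restricts the search to the set $B_{\le z}$ of graphical bi-sequences pointwise dominated by $z$. It then works with whole $k$-out-star sequences rather than unit moves: every $d\in B_{\le z}$ admits a directed Havel--Hakimi decomposition into $k$-out-star sequences (via Theorem \ref{theorem:HH}), Proposition \ref{pro:optim} identifies the single $L_1$-optimal star, and Lemma \ref{lemma:theorema} replaces the stars of a minimizer one at a time by greedily optimal stars, without increasing $f_z$ and without leaving $B_{\le z}$. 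What this coarser exchange buys is precisely control of the difficulty you flag at the end: graphicality after each replacement is guaranteed structurally by Theorem \ref{theorem:HH} (the residual of a graphical sequence after removing an admissible $k$-out-star is again graphical), so one never has to certify realizability of an ad hoc modified sequence.

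In your scheme that certification is the crux, and it is currently only asserted. A move that shifts one unit of in-degree from $j$ to $j'$ \emph{changes the bi-degree sequence}, so an alternating-edge swap inside a fixed realizing matrix does not suffice by itself; you must exhibit a realizing graph of the new sequence, uniformly over all configurations, including when the cap $h_{i^*}=\min(z_{i^*}^+,|T|)$ is active. Relatedly, your induction measure (``agree with $\hat d$ on one more coordinate'') is fragile: later exchanges can disturb earlier coordinates unless you recurse on the residual problem $z^{k}=z-\sum_{j\le k} x^{i_j}$, which is how the paper's proof (and its Lemma \ref{lemma:theorema}) is set up. Finally, your closing claim that the minimizing bi-degree sequence is unique is false in general: for instance, with $n=2$, $z^+=(1,0)$, $z^-=(0,0)$, both the empty graph and the single edge $1\to 2$ give $L_1$ cost $1$ with different bi-degree sequences. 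This is harmless for the theorem, which only requires the algorithm's output to attain the minimum, but the uniqueness remark should be dropped.
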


We prove Theorem \ref{thm:MLE:degree} by converting the directed Havel-Hakimi algorithm [\cite{Erdos:Peter:Miklos:Toroczkai:2010}]
into Algorithm \ref{algorithm:b} that performs $L_1$ ``projection" on the set
$B_n$, which motivated by \cite{Karwa:Slakovic:2016} who use the Havel-Hakimi algorithm [\cite{Havel:1955,Hakimi:1962}]
to find the solution to the undirected $L_1$ optimalization problem. Although the Havel-Hakimi algorithm had been proposed sixty years ago,
the directed version has been derived until \cite{Erdos:Peter:Miklos:Toroczkai:2010}.
In the directed case, one needs to consider the in-degree sequence and out-degree sequence
simultaneously. Therefore, our algorithm is not a trivial extension from the algorithm in the undirected case in \cite{Karwa:Slakovic:2016}.



\begin{remark}
In step 8 of Algorithm \ref{algorithm:b}, if some in-degrees of $z^{-}(T)$ are equal, we arrange
them by the decreasing order of their corresponding out-degrees. Assume that the order is
$z^-_{i_1}\ge \cdots \ge z^-_{i_k}$. Then we select their top $h_{i^*}$ values.
This rule applies hereafter and we will not emphasize it.
\end{remark}

The next theorem characterizes the error between $\hat{d}$ and $d$ in terms of the privacy parameter $\epsilon_n$.

\begin{theorem}
\label{lemma:con:3}
When $\epsilon_n (c+1) \ge 4 \log n$, we have
\[
\P( \| \hat{d} - d\|_\infty  > c ) \le \frac{4}{n},
\]
where for two bi-sequences $a=(a^+, a^-)$ and $b=(b^+, b^-)$, $\|a - b\|_\infty$ is defined as
\begin{equation}
\label{definition:ab}
\| a - b \|_\infty = \max\{ \| a^+ - b^+\|_\infty, \|a^- - b^- \|_\infty \}
\end{equation}
\end{theorem}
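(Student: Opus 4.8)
The plan is to split the statement into a deterministic, combinatorial bound on the projection and a tail bound on the noise. Write $e=(e^+,e^-)$ for the discrete Laplace noise added in Algorithm \ref{algorithm:a}, so that $z=d+e$ and, in the sense of \eqref{definition:ab}, $\|z-d\|_\infty=\|e\|_\infty$. The crux is the deterministic inequality $\|\hat d-d\|_\infty\le\|e\|_\infty$: the $L_1$ projection onto $B_n$ never moves a coordinate farther from the true graphical sequence $d$ than the largest noise coordinate. Granting it, $\{\|\hat d-d\|_\infty>c\}\subseteq\{\|e\|_\infty>c\}$, and the problem reduces to a tail estimate. I stress that optimality of $\hat d$ by itself is too weak: since $d\in B_n$ it only yields $\|z-\hat d\|_1\le\|z-d\|_1$, hence $\|\hat d-d\|_1\le 2\|e\|_1$, an $L_1$ bound of order $n$ that says nothing useful about the $\ell_\infty$ norm. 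The combinatorial structure of $B_n$ is essential.

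To prove $\|\hat d-d\|_\infty\le\|e\|_\infty$ I would use that $B_n$ is a \emph{jump system}: for any two bi-degree sequences in $B_n$ and any admissible single-coordinate step from one toward the other, there is a compensating single-coordinate step, again toward the other, whose combination remains in $B_n$. This is the directed analogue of the edge-swap structure behind the Havel--Hakimi procedure of Algorithm \ref{algorithm:b}. Set $m=\|e\|_\infty$ and suppose the minimizer leaves the box $\mathcal C=\{y:\|y-d\|_\infty\le m\}$, say $\hat d_u^+>d_u^+ + m$ at some node $u$. The step $-\mathbf{1}_{u^+}$ points from $\hat d$ toward $d$; the jump-system property supplies a second coordinate and a step toward $d$ there so that the two-coordinate move stays in $B_n$ and strictly decreases $\|\cdot-d\|_1$. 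Because $z_u^+\le d_u^+ + m<\hat d_u^+$, the $u$-step strictly lowers the $u$-term of $\|z-\cdot\|_1$, while the second step perturbs its term by at most one, so $\|z-\cdot\|_1$ does not increase. Thus the new point is again $L_1$-optimal but strictly closer to $d$; iterating confines a minimizer to $\mathcal C$. Turning this into a bound on the specific algorithmic $\hat d$ (by Theorem \ref{thm:MLE:degree} an $L_1$ projection) requires either uniqueness of the projection or the tie-breaking rule of the Remark, and pinning down this directed exchange together with the tie handling is the main obstacle, since it rests on the realizability theory of \emph{directed} degree sequences rather than on convexity.

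For the tail bound, each $e_i^\pm$ is discrete Laplace with $\lambda=\exp(-\epsilon_n/2)$, and summing the mass function gives, for every nonnegative integer $c$,
\[
\P(|e_i^\pm|>c)=\frac{2\lambda^{\,c+1}}{1+\lambda}\le 2\lambda^{\,c+1}=2\exp\!\left(-\tfrac{\epsilon_n(c+1)}{2}\right).
\]
The hypothesis $\epsilon_n(c+1)\ge 4\log n$ makes the right-hand side at most $2n^{-2}$. A union bound over the $2n$ noise coordinates then gives $\P(\|e\|_\infty>c)\le 2n\cdot 2n^{-2}=4/n$, and combining with the inclusion $\{\|\hat d-d\|_\infty>c\}\subseteq\{\|e\|_\infty>c\}$ from the deterministic step yields $\P(\|\hat d-d\|_\infty>c)\le 4/n$, which is the claim.
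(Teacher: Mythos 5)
Your tail estimate is correct and coincides with the paper's: for integer $c$, $\P(|e_i^{\pm}|>c)=2\lambda^{c+1}/(1+\lambda)\le 2e^{-\epsilon_n(c+1)/2}\le 2n^{-2}$ under $\epsilon_n(c+1)\ge 4\log n$, and a union bound over the $2n$ noise coordinates gives $4/n$. The reduction of the theorem to a deterministic inequality $\|\hat d-d\|_\infty\le\|e\|_\infty$ is also the skeleton of the paper's proof (which derives that inequality from the pointwise domination property of Algorithm \ref{algorithm:b}, namely $0\le \hat d_i^{\pm}\le\max(z_i^{\pm},0)$, rather than from $L_1$ optimality). The genuine gap is precisely the step you flag as ``the main obstacle,'' and it cannot be repaired along the lines you propose, because the inequality you are trying to prove is false for the algorithmic $\hat d$. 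Take the star $d^+=(n-1,0,\ldots,0)$, $d^-=(0,1,\ldots,1)$, with noise realization $e^+\equiv 0$, $e_1^-=0$ and $e_i^-=-1$ for $i\ge 2$ (an event of positive probability), so that $z^+=(n-1,0,\ldots,0)$, $z^-=\mathbf 0$ and $\|e\|_\infty=1$. In Algorithm \ref{algorithm:b} the set $T$ is empty on the first pass, so $h_1=\min(n-1,0)=0$ and the output is the empty graph, giving $\|\hat d-d\|_\infty=n-1$. This output is a bona fide minimizer of \eqref{eq:optimize}: any graphical $b$ with common degree sum $s$ satisfies $\|z^+-b^+\|_1+\|z^--b^-\|_1\ge |(n-1)-s|+s\ge n-1$, and both $\mathbf 0$ and $d$ attain the value $n-1$. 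So the failure is not a defect of Theorem \ref{thm:MLE:degree}; the $L_1$ projection onto $B_n$ is genuinely non-unique, and the algorithm can return a minimizer that is $\ell_\infty$-far from $d$ even though every noise coordinate is $0$ or $-1$.

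The same example shows why the jump-system/exchange route cannot close the gap: at best such an argument proves that \emph{some} minimizer lies in the box $\{y:\|y-d\|_\infty\le\|e\|_\infty\}$ (in the example, $d$ itself is one), whereas the theorem concerns the specific output of Algorithm \ref{algorithm:b}; neither uniqueness (which fails, as above) nor the tie-breaking rule of the Remark (which only orders the choice of in-neighbours inside step 8) forces the algorithm to select the minimizer inside the box. Note also that the paper's own deduction from pointwise domination only settles the overshoot direction $\hat d_i^{\pm}\ge d_i^{\pm}$; the undershoot direction, which is what operates in the counterexample, is exactly where both arguments break down. Any complete proof of the stated probabilistic bound must therefore treat the undershoot case probabilistically, i.e.\ show that the noise patterns which starve the algorithm of in-degree (or out-degree) targets --- many simultaneously negative $e_i^-$ --- themselves have probability $O(1/n)$; the purely deterministic inclusion $\{\|\hat d-d\|_\infty>c\}\subseteq\{\|e\|_\infty>c\}$ on which your proposal rests does not hold.
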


As expected, the privacy parameter $\epsilon_n$ is smaller, the error between the original bi-degree and
its MLE $\hat{d}$ becomes larger. For any fixed $\tau\in (0, 1/2)$, if $\epsilon_n=\Omega(n^{-(1/2-\tau)})$, then
\begin{equation}
\label{eq:epsilon}
\| \hat{d} - d \|_\infty = O_p( n^{(1/2-\tau)}\log n).
\end{equation}

Both $\tilde{d}$ and $\hat{d}$ are the EDP estimator of $d$, 
where the latter is due to Lemma \ref{lemma:fg}.
We can use $\hat{d}$ to replace $\tilde{d}$ in equations \eqref{eq:likelihood-DP}
to obtain the denoised estimator of the parameter $\theta$ and denote the solution as $\bar{\theta}$.
By repeatedly using Lemma \ref{lemma:fg}, $\widehat{\theta}$ and $\bar{\theta}$ are both EDP estimators.
By noting \eqref{eq:epsilon} holds, with the similar lines of arguments for Theorems \ref{Theorem:con} and \ref{Theorem:binary:central},
the DP estimator is consistent and asymptotically normal stated in Theorem \ref{Theorem:DP}, whose proof is omitted.

\begin{theorem}\label{Theorem:DP}
Assume that $A \sim \P_{ \theta^*}$. \\  
(i) If $e^{12\| \theta^*\|_\infty}=o( (n/\log n)^{1/2})$ and $\epsilon_n=\Omega( (\log n/n)^{1/2})$, then as $n$ goes to infinity,
with probability approaching one, the EDP estimator $\bar{\theta}$ exists
and satisfies
\[
\|\bar{\theta} - \theta^* \|_\infty = O_p\left( \frac{ (\log n)^{1/2}e^{6\|\theta^*\|_\infty} }{ n^{1/2} } \right)=o_p(1).
\]
Further, if $\bar{\theta}$ exists, it is unique.\\
(ii) If  $e^{18\|\theta^*\|_\infty} =o( (n/\log n)^{1/2})$ and $\epsilon_n^{-1}e^{6\|\theta^*\|_\infty} = o(n^{1/2}/\log n)$, then for any fixed $k\ge 1$, as $n \to\infty$, the vector consisting of the first $k$ elements of $(\bar{\theta}-\theta^*)$ is asymptotically multivariate normal with mean $\mathbf{0}$ and covariance matrix given by the upper left $k \times k$ block of $S$ defined at \eqref{definition:S}.
\end{theorem}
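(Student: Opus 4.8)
\section*{Proof proposal for Theorem \ref{Theorem:DP}}

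The plan is to mirror the proofs of Theorems \ref{Theorem:con} and \ref{Theorem:binary:central}, observing that the only structural change is that the ``noise'' driving the estimating equations is now the denoising error $\hat d - d$ rather than the sub-exponential vector $e$. Concretely, define the system $\bar F(\theta)$ exactly as in \eqref{eq:F-DP} but with $\tilde d$ replaced by $\hat d$; then $\bar\theta$ is the root of $\bar F(\theta)=0$. The Jacobian $\bar F'(\theta)$ is \emph{identical} to $F'(\theta)$, because the observed degrees enter the equations only additively, so every bound on $F'$, on its inverse, and on the $S$-approximation established earlier transfers verbatim. It therefore suffices to re-estimate the single data-dependent quantity $\bar F(\theta^*)$ and to track how the denoising error propagates through $S$.

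For part (i) I would bound $\|\bar F(\theta^*)\|_\infty$ by splitting $\bar F_i(\theta^*)=(d_i^+ - \E_{\theta^*} d_i^+)+(\hat d_i^+ - d_i^+)$ (and analogously for the in-degree block). The first summand is the intrinsic degree fluctuation, a sum of independent Bernoulli variables, so a Bernstein plus union-bound argument gives $\max_i|d_i^+-\E_{\theta^*} d_i^+|=O_p((n\log n)^{1/2})$ exactly as in the proof of Theorem \ref{Theorem:con}. The second summand is controlled by \eqref{eq:epsilon}, which under $\epsilon_n=\Omega(n^{-(1/2-\rho)})$ gives $\|\hat d-d\|_\infty=O_p(n^{1/2-\rho}\log n)=o((n\log n)^{1/2})$ for every fixed $\rho>0$. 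Hence the denoising error is of strictly smaller order than the degree fluctuation, so $\|\bar F(\theta^*)\|_\infty=O_p((n\log n)^{1/2})$, which is precisely the bound from Theorem \ref{Theorem:con} with $(1+\kappa)$ replaced by a constant. Feeding this into the same Newton iteration with geometric convergence rate yields existence with probability tending to one and the stated rate $O_p((\log n)^{1/2}e^{6\|\theta^*\|_\infty}/n^{1/2})$; uniqueness follows from the identical monotonicity/convexity argument.

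For part (ii) I would reproduce the representation $\bar\theta-\theta^*=S(\hat g-\E g)+R$ from the proof of Theorem \ref{Theorem:binary:central}, the remainder $R$ being negligible by the same estimates (which depend only on the consistency rate of part (i) and on $F'$). I then decompose $\hat g-\E g=(g-\E g)+(\hat g-g)$. The term $S(g-\E g)$ is the noiseless $p_0$ contribution and produces, for any fixed $k$, the asymptotic normal limit with covariance equal to the upper-left $k\times k$ block of $S$. The essential new step is to show $S(\hat g-g)$ is coordinatewise negligible. For $i\le k$, the structure of $S$ in \eqref{definition:S} gives
\[
\bigl(S(\hat g-g)\bigr)_i=\frac{\hat d_i^+-d_i^+}{v_{i,i}}+\frac{1}{v_{2n,2n}}\Bigl(\sum_{j=1}^n(\hat d_j^+-d_j^+)-\sum_{j=1}^{n-1}(\hat d_j^--d_j^-)\Bigr).
\]
Here I would exploit the fact that $\hat d$ is a \emph{graphical} bi-degree sequence (Theorem \ref{thm:MLE:degree}), so $\sum_j\hat d_j^+=\sum_j\hat d_j^-$ and likewise for $d$; the bracketed sum then collapses to the single term $\hat d_n^--d_n^-$. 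Consequently the denoising error enters only through bounded quantities controlled by \eqref{eq:epsilon} together with the lower bounds $v_{i,i},v_{2n,2n}\gtrsim n\,e^{-2\|\theta^*\|_\infty}$, so $(S(\hat g-g))_i$ vanishes relative to the CLT scale. This collapse is exactly why no additional variance factor (the $s_n^2/v_{2n,2n}^2$ term of Theorem \ref{Theorem:binary:central}) appears in the differentially private limit.

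The hard part will be this last step. Unlike the i.i.d.\ sub-exponential noise of Theorem \ref{Theorem:binary:central}, the denoising error $\hat d-d$ is a complicated dependent object---the $L_1$-projection of the noisy sequence onto $B_n$ produced by Algorithm \ref{algorithm:b}---for which one has only the uniform $\ell_\infty$ bound \eqref{eq:epsilon} and the exact graphical constraint, with no independence or mean-zero structure. The crux is to argue that these two facts alone suffice to make $S(\hat g-g)$ negligible, and in particular to confirm that the interaction between the rate exponent $\rho$ and the growth of $e^{\|\theta^*\|_\infty}$ permitted by the hypothesis $e^{18\|\theta^*\|_\infty}=o((n/\log n)^{1/2})$ indeed pushes the perturbation below the $n^{-1/2}$ scale of the limiting distribution.
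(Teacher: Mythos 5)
You have, in effect, written out the proof that the paper omits: the paper's entire argument for Theorem \ref{Theorem:DP} is the remark that, since \eqref{eq:epsilon} holds, the result follows ``with the similar lines of arguments for Theorems \ref{Theorem:con} and \ref{Theorem:binary:central}''. Your route is exactly the intended one, and your key structural observation --- that graphicality of both $d$ and $\hat d$ (Theorem \ref{thm:MLE:degree}) collapses $\sum_{j=1}^n(\hat d_j^+-d_j^+)-\sum_{j=1}^{n-1}(\hat d_j^--d_j^-)$ to the single term $\hat d_n^--d_n^-$ --- appears nowhere in the paper, yet it is indispensable, since the crude bound $n\|\hat d-d\|_\infty$ on that sum is of order $n^{3/2-\rho}\log n$ and would destroy the argument. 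One amendment: you need this collapse already in part (i), not only in part (ii), because the Newton-method bound on $r=\|[F'(\theta^*)]^{-1}\bar F(\theta^*)\|_\infty$ contains the aggregated term $|\sum_{i=1}^n\bar F_i(\theta^*)-\sum_{i=1}^{n-1}\bar F_{n+i}(\theta^*)|/v_{2n,2n}$, which cannot be controlled by $\|\bar F(\theta^*)\|_\infty$ alone. With that fix, your part (i) does close, giving $r=O_p(e^{6\|\theta^*\|_\infty}(\log n/n)^{1/2})$ for every fixed $\rho\in(0,1/2)$, exactly parallel to Theorem \ref{Theorem:con}.

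The genuine gap is in part (ii), at precisely the step you flag as ``the crux'', and your suspicion is correct: under the stated hypotheses it cannot be closed. In the worst case $\epsilon_n\asymp n^{-(1/2-\rho)}$, Theorem \ref{lemma:con:3} gives only $\|\hat d-d\|_\infty=O_p(n^{1/2-\rho}\log n)$, while the best available bounds on the reciprocals are $1/v_{i,i}=O(e^{2\|\theta^*\|_\infty}/n)$ and $1/v_{2n,2n}=O(e^{2\|\theta^*\|_\infty}/n)$. Hence your estimate reads
\[
[S(\hat g-g)]_i=O_p\left(n^{-1/2-\rho}\,e^{2\|\theta^*\|_\infty}\log n\right),
\]
and for the matrix remainder $R=V^{-1}-S$, using Proposition \ref{pro:inverse:appro},
\[
[R(\hat g-g)]_i=O_p\left(n\cdot \frac{e^{6\|\theta^*\|_\infty}}{n^2}\cdot n^{1/2-\rho}\log n\right)=O_p\left(n^{-1/2-\rho}\,e^{6\|\theta^*\|_\infty}\log n\right).
\]
Negligibility at the CLT scale $n^{-1/2}$ (the limiting standard deviation is at least of order $n^{-1/2}$ since $v_{i,i}\le (n-1)/4$) therefore requires $e^{2\|\theta^*\|_\infty}\log n=o(n^{\rho})$ and $e^{6\|\theta^*\|_\infty}\log n=o(n^{\rho})$, respectively. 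But hypothesis (ii) only yields $e^{2\|\theta^*\|_\infty}=o((n/\log n)^{1/18})$ and $e^{6\|\theta^*\|_\infty}=o((n/\log n)^{1/6})$, so for any fixed $\rho<1/18$ (respectively $\rho<1/6$) and $\|\theta^*\|_\infty$ growing near its maximal allowed rate, the required smallness fails. Moreover the $\ell_\infty$ control of $\hat d-d$ cannot be improved in general: coordinatewise, the $L_1$ projection has no reason to shrink an individual error below the discrete Laplace noise scale $1/\epsilon_n\asymp n^{1/2-\rho}$. So completing the proof requires an additional interplay condition, roughly $e^{6\|\theta^*\|_\infty}\log n=o(n^{\rho})$, which holds automatically when $\|\theta^*\|_\infty$ is bounded but not under (ii) alone. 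This is arguably a defect of the theorem as stated, silently passed over by the paper's omitted proof; within your write-up it remains an unproved, and in full generality unprovable, step.
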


\begin{remark}
Since the distribution of the difference $\hat{d} - d$ is difficult to obtain, we don't have the asymptotic result like in Theorem \ref{Theorem:binary:central} (ii).
By Theorem \ref{Theorem:DP}, the convergence rate of $\bar{\theta}_i$ is $1/v_{i,i}^{1/2}$ for any fixed $i$. Since $(n-1)e^{-2\|\theta^*\|_\infty}/4\le v_{i,i}\le (n-1)/4$, the rate of convergence is between $O(n^{-1/2}e^{\|\theta^*\|_\infty})$ and $O(n^{-1/2})$,
which is the same as the non private estimator [\cite{Yan:Leng:Zhu:2016}].
\end{remark}

\section{Numerical studies}
\label{section:simulation}

\subsection{Simulation}

In this section, we carry out numerical simulations by using the discrete Laplace mechanism in Algorithm \ref{algorithm:a}. 
We assess the performance of the estimator for finite sizes of networks when
$n$, $\epsilon_n$ or the range of $\theta_i$ varies and compare the simulation results of the non-denoised estimator with those of the denoised estimator.

The parameters in the simulations are as follows.
Similar to \cite{Yan:Leng:Zhu:2016}, the setting of the parameter $\theta^*$ takes a linear form.
Specifically, we set $\alpha_{i+1}^* = (n-1-i)L/(n-1)$ for $i=0, \ldots, n-1$.
For the parameter values of $\beta$, let $\beta_i^*=\alpha_i^*$, $i=1, \ldots, n-1$ for simplicity and $\beta_n^*=0$ by default.
We considered four different values for $L$, $L=0$, $\log(\log n)$, $(\log n)^{1/2}$ and $\log n$, respectively.
We simulated three different values for $\epsilon_n$: one is fixed ($\epsilon_n=2$) and the other two values tend to zero with $n$, i.e., $\epsilon_n=\log (n)/n^{1/4}, \log(n)/n^{1/2}$.
We considered three values for $n$, $n=100, 200$ and $500$.
Each simulation was repeated $10,000$ times.

By Theorem \ref{Theorem:binary:central}, $\hat{\xi}_{i,j} = [\hat{\alpha}_i-\hat{\alpha}_j-(\alpha_i^*-\alpha_j^*)]/(1/\hat{v}_{i,i}+1/\hat{v}_{j,j})^{1/2}$, $\hat{\zeta}_{i,j} = (\hat{\alpha}_i+\hat{\beta}_j-\alpha_i^*-\beta_j^*)/(1/\hat{v}_{i,i}+1/\hat{v}_{n+j,n+j})^{1/2}$, and $\hat{\eta}_{i,j} = [\hat{\beta}_i-\hat{\beta}_j-(\beta_i^*-\beta_j^*)]/(1/\hat{v}_{n+i,n+i}+1/\hat{v}_{n+j,n+j})^{1/2}$
converge in distribution to the standard normal distributions, where $\hat{v}_{i,i}$ is the estimate of $v_{i,i}$ by replacing $\theta^*$ with $\widehat{\theta}$.  Therefore, we assess the asymptotic normality of $\hat{\xi}_{i,j}$, $\hat{\zeta}_{i,j}$ and $\hat{\eta}_{i,j}$ using the quantile-quantile (QQ) plot.  Further, we record the coverage probability of the $95\%$ confidence interval, the length of the confidence interval, and the frequency that the estimate does not exist.  The results for $\hat{\xi}_{i,j}$, $\hat{\zeta}_{i,j}$ and $\hat{\eta}_{i,j}$ are similar, thus only the results of $\hat{\xi}_{i,j}$ are reported.
Note that $\bar{\theta}$ denotes the denoised estimator corresponding to the denoised bi-degree sequence $\hat{d}$.
The notation $\bar{\xi}_{i,j}$ is similarly defined and it also has the same asymptotic distribution as $\hat{\xi}_{i,j}$ by Theorem \ref{Theorem:DP}.
We also draw the QQ plots for $\bar{\xi}_{i,j}$ and $\hat{\alpha}_i-\alpha_i^*$. The distance between the original bi-degree sequence $d$ and the noisy bi-sequence  $z$ is also reported
in terms of $\|d - z\|_\infty$.


The average value of the $\ell_\infty$-distance between $d$ and $z$ is reported in Table \ref{Table:b}.
We can see that the distance becomes larger as $\epsilon_n$ decreases. It means that smaller $\epsilon_n$ provides more privacy protection.
For example, when $\epsilon_n$ changes from $\log n/n^{1/4}$ to $\log n/n^{1/2}$, $\|d-z\|_\infty$ dramatically increases from $8$ to $26$ in the case $n=100$.
As expected,  the distance also becomes larger as $n$ increases when $\epsilon_n$ is fixed.

\begin{table}[h]
\centering
\caption{The distance $\|d-z\|_\infty$. }
\label{Table:b}
\vskip5pt
\begin{tabular}{cccc}
\hline
         & \multicolumn{3}{c}{$\epsilon_n$} \\
         \cline{2-4}
$n$      & $2$ & $\log n/n^{1/4}$ & $\log n/n^{1/2}$  \\
\hline
$100$      & $5.7$     & $8.0$   & $25.5$ \\
$200$      & $6.4$     & $9.2$   & $35.1$ \\
$500$      & $7.4$     & $11.3$  & $53.8$ \\
\hline
\end{tabular}
\end{table}

\begin{figure}[!htb]
\centering
\includegraphics[ height=5in, width=6in, angle=0]{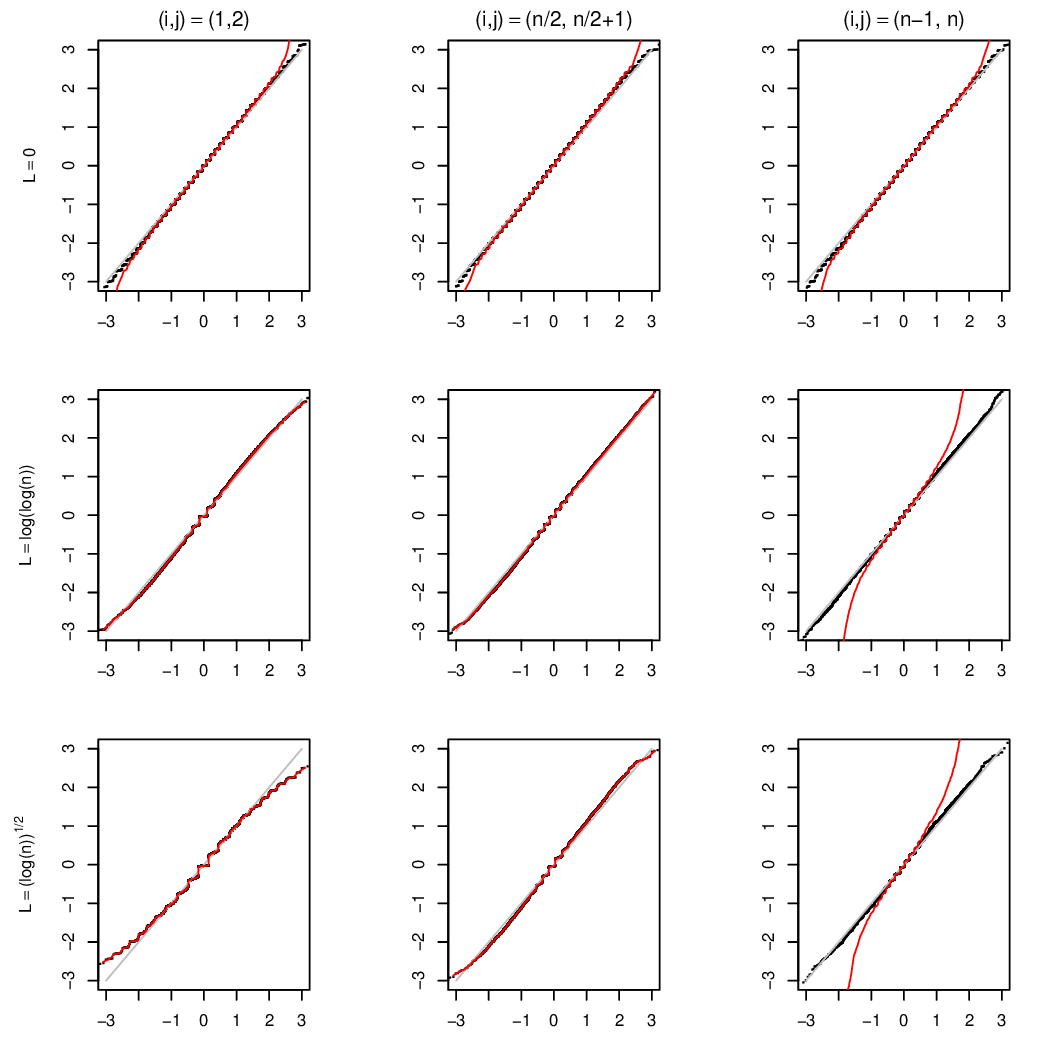}
\caption{The QQ plots of $\xi_{i,j}$ with black color for $\hat{\xi}_{i,j}$ and red color for $\bar{\xi}_{i,j}$. }
\label{figure-a}
\end{figure}

When $\epsilon_n=2$, the QQ-plots under $n=100, 200, 500$ are similar and
we only show the QQ-plots for $\hat{\xi}_{i,j}$ when $n=100$ in Figure \ref{figure-a} to save space.
The other QQ-plots for $\epsilon_n=\log n/n^{1/4}, \log n/n^{1/2}$ are shown in the online supplementary material.
In the QQ-plots, the horizontal and vertical axes are the theoretical and empirical quantiles, respectively,
and the straight lines correspond to the reference line $y=x$.
In Figure \ref{figure-a}, we first observe that the empirical quantiles agree well with the ones of the standard normality for non denoised estimates (i.e., $\hat{\xi}_{i,j}$) when $L=0$ and $\log(\log n)$, while
there are notable deviations for pair $(1, 2)$ when $L = (\log n)^{1/2}$. These results are very similar to those in \cite{Yan:Leng:Zhu:2016} where
the original bi-degree sequences are used to estimate the parameters.
Second, by comparing the QQ plots for $\hat{\xi}_{i,j}$ (in black color) and $\bar{\xi}_{i,j}$ (in red color), we find that the performance of $\hat{\xi}_{i,j}$
is much better than that of $\bar{\xi}_{i,j}$ for the pair $(n-1, n)$ when $L\ge \log(\log n)$, whose QQ plots derivative from the diagonal line in both ends.
When $\epsilon_n=\log n/n^{1/4}$, the QQ-plots are in Figures 1, 2 and 3 in the online supplementary material, corresponding to $n=100, 200, 500$ respectively.
These figures exhibit similar phenomena.
Moreover, the derivation of the QQ-plots from the straight becomes smaller as $n$ increases, and
they match well when $n=500$.
The QQ-plots under $\epsilon_n=\log n/n^{1/2}$ are drawn in Figures 4, 5 and 6 in the online supplementary material, corresponding to $n=100, 200, 500$ respectively.
In this case, the condition in Theorem \ref{Theorem:binary:central} fails and
these figures shows obvious derivations from the standard normal distribution.
It indicates that $\epsilon_n$ should not go to zero quickly as $n$ increases in order to guarantee good utility.
Lastly, we observe that when $L=\log n$ for which the condition in Theorem \ref{Theorem:binary:central} fails, the estimate did not exist in all repetitions (see Table 1 in supplementary material).
Thus the corresponding QQ plot could not be shown.

In order to assess the effect of the additional variance factor (i.e., $s_n^2/\hat{v}_{2n,2n}^2$) in Theorem \ref{Theorem:binary:central}, we draw
the QQ-plots for $(\hat{\alpha}_{i}-\alpha_i)/\hat{\sigma}_i^{(1)}$ denoted by the black color and $(\hat{\alpha}_{i}-\alpha_i)/\hat{\sigma}_i^{(2)}$ by the red color in Figure 7
in supplementary material, where
$(\hat{\sigma}_i^{(1)})^2=1/\hat{v}_{i,i}+1/\hat{v}_{2n,2n}+s_n^2/\hat{v}_{2n,2n}^2$, $(\hat{\sigma}_i^{(2)})^2=1/\hat{v}_{i,i}+1/\hat{v}_{2n,2n}$, $n=100$ and $\epsilon=2$.
From this figure, we can see that the empirical quantiles agree well with the ones of the standard normality when the variance of $\hat{\alpha}_{i}$ is correctly specified (i.e., $\hat{\sigma}_i^{(1)}$).
When ignoring the additional variance factor,
there are obvious derivations for $(\hat{\alpha}_{i}-\alpha_i)/\hat{\sigma}_i^{(2)}$.
It indicates that the additional variance factor can not be ignored when the noise is not very small,
agreeing with Theorem \ref{Theorem:binary:central}.

Table 1 in supplementary material reports the coverage frequencies of the $95\%$ confidence interval for $\alpha_i - \alpha_j$, the length of the confidence interval, and the frequency that the MLE did not exist.
As expected, the length of the confidence interval increases as $L$ increases and decreases as $n$ increases.
We first look at the simulation results in the case of $\epsilon_n=2$:
when $L\le \log(\log(n))$, most of simulated coverage frequencies for the estimates are close to the targeted level
and the non denoised estimate has better performance than  the denoised estimate;
the values under the pair $(n-1, n)$ corresponding to the denoised estimate are lower than the nominal level when $L=\log(\log(n))$.
When $L = (\log n)^{1/2}$, both denoised and non denoised estimates failed to exist with a positive frequency while
the estimate did not exist in any of the repetitions in the case of $L=\log n$.
The results in the case of $\epsilon_n=\log n/n^{1/4}$ exhibit similar phenomena. However,
the simulated coverage frequencies are a little lower than the nominal level when $n=100$, showing
that smaller $\epsilon_n$ needs larger $n$ to guarantee high accuracy.
The results in the case of $\epsilon_n=\log n/n^{1/2}$ are shown in Table 1 in the online supplementary material.
From this table, we can see that the simulated coverage frequencies are obviously far away from the nominal level and
the estimate fails to exist with positive frequencies when $L\ge \log(\log(n))$.

\subsection{Real data analysis}

We evaluate how close the estimator $(\hat{\alpha}, \hat{\beta})$ is to the MLE $(\tilde{\alpha}, \tilde{\beta})$ fitted in the $p_0$ model
with the original bi-degree sequence through three real network datasets, which
are the Children's Friendship data, Lazega's Law Firm data and Uc irvine messages data, respectively.
We only present the  analytical  results of the Uc irvine messages data here and the others are put in supplementary material. 
Note that $(\hat{\alpha}, \hat{\beta})$ is the edge differentially private estimator of the vector parameters $\alpha$ and $\beta$. If only the
private estimator is released, then whether an edge is present or not in the original dataset could almost not be detected.
We chose $\epsilon_n$ equal to $1$, $2$ and $3$ as in \cite{Karwa:Slakovic:2016} and repeated to release the bi-degree sequence using Algorithm \ref{algorithm:a}
$1,000$ times for each $\epsilon$. Then we computed the average private estimate and the upper ($97.5^{th}$) in blue color and the lower ($2.5^{th}$) quantiles in orange color of the estimates conditional on the event that the private estimate exists.

The  Uc irvine messages network data was collected from an online community of students at the University of California, Irvine [\cite{Opsahl-Panzarasa-2009}].
It has a total of $1899$ nodes and each node represents a student. A directed edge is established from one
student to another if one or more messages have been sent from
the former to the latter. A total of $20,296$ edges form and  the  edge density is $0.56\%$, indicating a very sparse network.
Among $1,899$ nodes, there are $586$ nodes having no out-edges or in-edges. We remove them due to that the non private MLE does not exist in this case.
To guarantee non zero out-degrees and in-degrees after adding noises with a large probability,
we only analyze a subgraph with their out-degrees and in-degrees both larger than $5$.
After data preprocessing, only $696$ nodes are left and
the quantiles of $0$, $1/4$, $1/2$, $3/4$, $1$ are  $3$, $8$, $14$, $26$, $164$ for out-degrees and $4$, $10$, $16$, $27$, $121$
for in-degrees, respectively.

When many nodes have few links to others,  large noise is easy to cause the output with non positive elements in Algorithm \ref{algorithm:a}.
When $\epsilon=1$, the average $\ell_\infty$-distance between $d$ and $\tilde{d}$ is $15.6$
and all private estimates fail to exist. In this case, we try another $\epsilon=\log n/n^{1/4}$ ($\approx 1.27$).
The frequencies that the private estimate fails to exist are $99.3\%$, $54.9\%$ and $8.3\%$ for $\epsilon=\log n/n^{1/4},2,3$, respectively.
The results are shown in Figure \ref{figure-message}. From this figure, we can see that the mean value of $\hat{\alpha}$ or $\hat{\beta}$
are very close to the MLE and the MLE still lies in the $95\%$ confidence interval.

\begin{figure}[!htb]
\centering
\includegraphics[ height=4in, width=6in, angle=0]{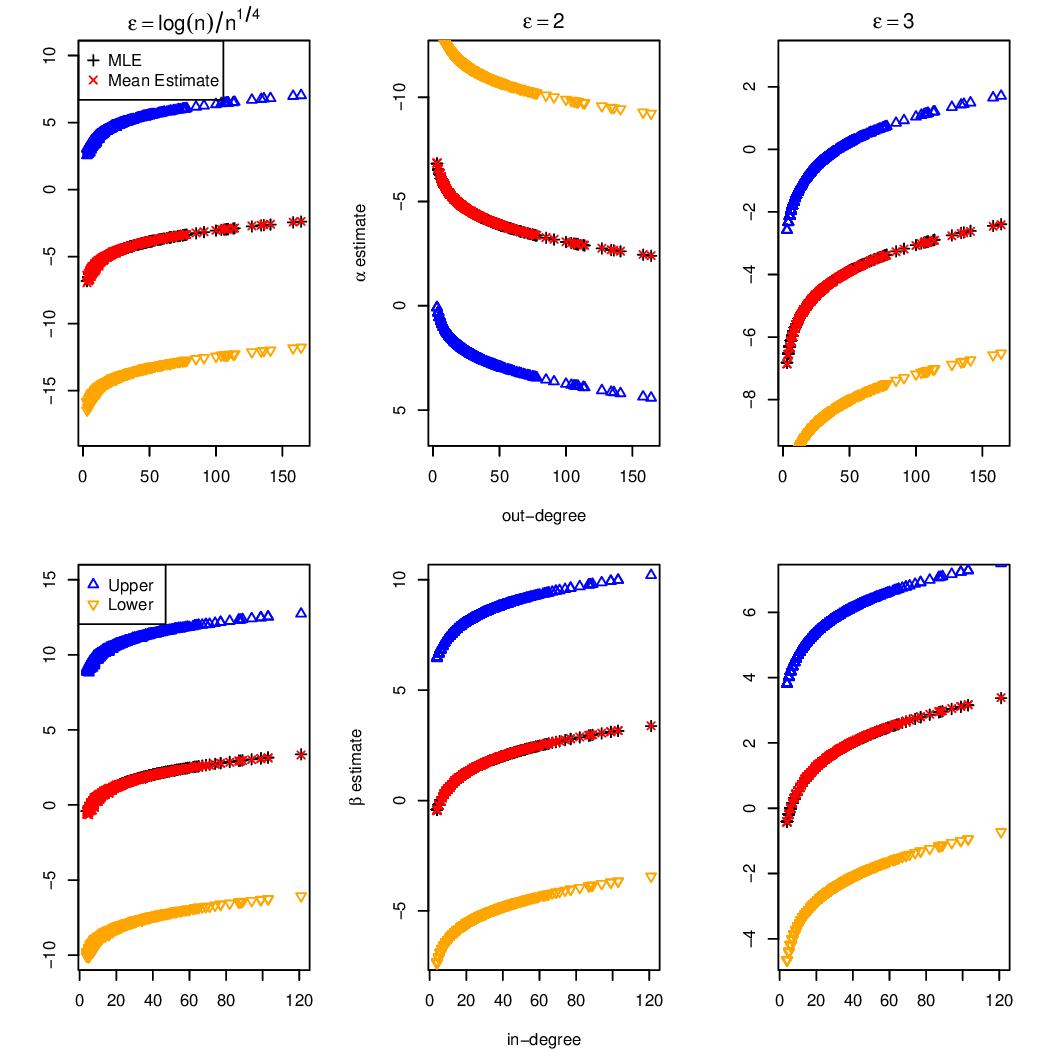}
\caption{The differentially private estimate $(\hat{\alpha}, \hat{\beta})$ with the MLE for the Uc irvine messages network.
}
\label{figure-message}
\end{figure}

\section{Discussion}
\label{section:discussion}

We have presented the consistency of the differentially private estimator of the parameter in the $p_0$ model under some mild conditions when the
discrete Laplace noise is added into the bi-degree.
We have revealed a phase transition for the asymptotic variance of the estimator in which an additional variance factor appears when
the variance of the noise increases. The simulation shows that ignoring it could lead to invalid conference intervals.
The added noise introduces considerable error when applying the noisy bi-sequence to estimate the degree distribution.
We propose an efficient algorithm to denoise the noisy bi-sequence.
The denoised bi-sequence can be used to obtain an accurate estimate of the degree distribution of a directed graph.
Our simulation studies show that the non denoised estimator has a better performance than the denoised estimator for finite network sizes.
On the other hand, when the privacy parameter $\epsilon_n$ is small, the private estimate fails to exist with
positive frequencies according to simulations and real data analyses, especially when the network dataset is sparse.
An approach to avoid this problem is adding positive Laplace random noises or using $f$-differential privacy.
We would like to investigate this problem in the future.

The conditions in Theorems \ref{Theorem:con} and \ref{Theorem:binary:central} induce an interesting trade-off between the private parameter
measuring the magnitude of the noise 
and the growing rate of the parameter $\theta$.
If the parameter $\epsilon_n$ is large, $\theta$ can be allowed to be relatively large.
For instance,  if $\epsilon_n = O(1)$, then the condition (i.e., $(1+4\epsilon_n^{-1}) e^{12\|\theta^*\|_\infty } = o( (n/\log n)^{1/2} )$) in Theorem \ref{Theorem:con}
becomes $e^{12\|\theta^*\|_\infty } = o( (n/\log n)^{1/2} )$.
Moreover, the condition in Theorem \ref{Theorem:binary:central} is much stronger than that in Theorem \ref{Theorem:con}.
The asymptotic behavior of the estimator is not only determined by the growing rate of the parameter $\theta$, but also by the configuration of the parameter.
It would be of interest to see whether these conditions can be relaxed.

There are two different tasks for data privacy problem. The first is data protection.
If the network model contains other network features such as
$k$-stars and triangle and only these network statistics are of interest, then
the additive noisy mechanism in this paper can be used to disclose them safely and it satisfied the edge differential privacy if the Laplace noise is added.
The second is making inference from the noisy data.
In order to extend the method of deriving the consistency of the estimator in our paper to other network models,
one needs to establish a geometrical rate of convergence of the Newton iterative sequence.
This is not easy for network models with other network features since it is difficult to derive the upper bound of the matrix norm for
the inverse matrix of the Fisher information matrix without some special matrix structures.
At the same time, it is also difficult to extend the method of deriving asymptotic normality of the estimator to network models with other network features
since it is generally difficult to derive the approximate inverse matrix of a general Fisher information matrix.


\end{document}